\def\BibTeX{{\rm B\kern-.05em{\sc i\kern-.025em b}\kern-.08em
    T\kern-.1667em\lower.7ex\hbox{E}\kern-.125emX}}
\newtheorem{theorem}{Theorem}
\newcommand{\draft}[1]{#1}
\newcommand{\users}{\mathcal{U}}
\newcommand{\model}{\mathcal{M}}
\newcommand{\services}{\mathcal{S}}
\newcommand{\edges}{\mathcal{E}}
\newcommand{\SM}{\mathcal{{SM}}}
\DeclareMathOperator*{\argmax}{arg\,max}
\title{QoS-Aware Placement of Deep Learning Services on the Edge with Multiple Service Implementations}
\author{
    Nathaniel Hudson\textsuperscript{*}, 
    Hana Khamfroush\textsuperscript{*}, and
    Daniel E. Lucani\textsuperscript{\dag}
    \\
    \textsuperscript{*}University of Kentucky, Lexington, KY, USA. Email: nathaniel.hudson@uky.edu, khamfroush@cs.uky.edu
    \\
    \textsuperscript{\dag}Aarhus University, Aarhus, Denmark. Email: daniel.lucani@eng.au.dk
}
\begin{document}
\maketitle

\begin{abstract}
	Mobile edge computing pushes computationally-intensive services closer to the user to provide reduced delay due to physical proximity. This has led many to consider deploying deep learning models on the edge --- commonly known as edge intelligence~(EI). EI services can have many model implementations that provide different QoS. For instance, one model can perform inference faster than another (thus reducing latency) while achieving less accuracy when evaluated. In this paper, we study joint service placement and model scheduling of EI services with the goal to maximize Quality-of-Servcice~(QoS) for end users where EI services have multiple implementations to serve user requests, each with varying costs and QoS benefits. We cast the problem as an integer linear program and prove that it is NP-\emph{hard}. We then prove the objective is equivalent to maximizing a monotone increasing, submodular set function and thus can be solved greedily while maintaining a $(1-1/e)$-approximation guarantee. We then propose two greedy algorithms: one that theoretically guarantees this approximation and another that empirically matches its performance with greater efficiency. Finally, we thoroughly evaluate the proposed algorithm for making placement and scheduling decisions in both synthetic and real-world scenarios against the optimal solution and some baselines. In the real-world case, we consider real machine learning models using the ImageNet 2012 data-set for requests. Our numerical experiments empirically show that our more efficient greedy algorithm is able to approximate the optimal solution with a 0.904 approximation on average, while the next closest baseline achieves a 0.607 approximation on average.
\end{abstract}

\begin{IEEEkeywords}
    Service Placement, Edge Intelligence, Edge Computing, Deep Learning, Optimization, Quality-of-Service
\end{IEEEkeywords}

\section{Introduction}
\label{sec:intro}
The growth of the Internet has given birth to the advent of the Internet-of-Things~(IoT). This ecosystem consists of countless different devices, or things (e.g., sensors, home appliances), that can seamlessly communicate with one another. More importantly, these devices also serve to generate/collect data. In order to acquire meaningful information from these data, they must first be processed. The scale of the IoT poses a problem for processing these data in a timely fashion through a conventional, centralized approach (e.g., cloud computing). As such, a promising framework to approach this problem has been that of \emph{mobile edge computing}~(MEC)~\cite{ETSI, shi2016edge, mao2017survey, mach2017mobile}. 

The MEC framework considers the deployment of \emph{edge clouds} (or \emph{edge servers}) that provide communication, compute, and storage resources closer to the end user devices to ameliorate latency incurred from physical distance. This physical proximity allows for more immediate and timely data processing for nearby devices in IoT. However, MEC is not without challenges. The hardware resources available at an individual edge cloud pales in comparison to that available at the far away central cloud server. As such, decisions related to how these resources are spent must be optimized. Regardless, MEC remains a promising framework for performing timely data processing for IoT devices, such as smart sensors.

The popularity of \emph{machine learning}~(ML) for performing the task of data processing has skyrocketed in recent years. ML has been shown to be capable of achieving remarkable accuracy for complex tasks (e.g., image classification, video classification, speech-to-text). Due to the flexibility and performance of ML technologies, deploying such models on the edge to process IoT data is promising. Thus, the notion of \emph{edge intelligence}~(EI) has gained prominence. 
A notable feature of EI services, such as image classification, is that several different EI architectures (i.e., deep neural networks) can be implemented to perform inference for some input for a service. These different architectures can have varying trade-offs in terms of the time they take to perform inference, the size of input data they require, and how accurate they are in practice when performing inference. Given these observations, in this work, we adapt the well-studied problem of \emph{service placement} in MEC to consider EI services with different model implementations. Most notably, this problem aims to maximize the Quality-of-Service~(QoS) provided by the edge when services are allowed to have several different model implementations to serve user requests.

\begin{figure}
    \centering
    \includegraphics[width=\linewidth]{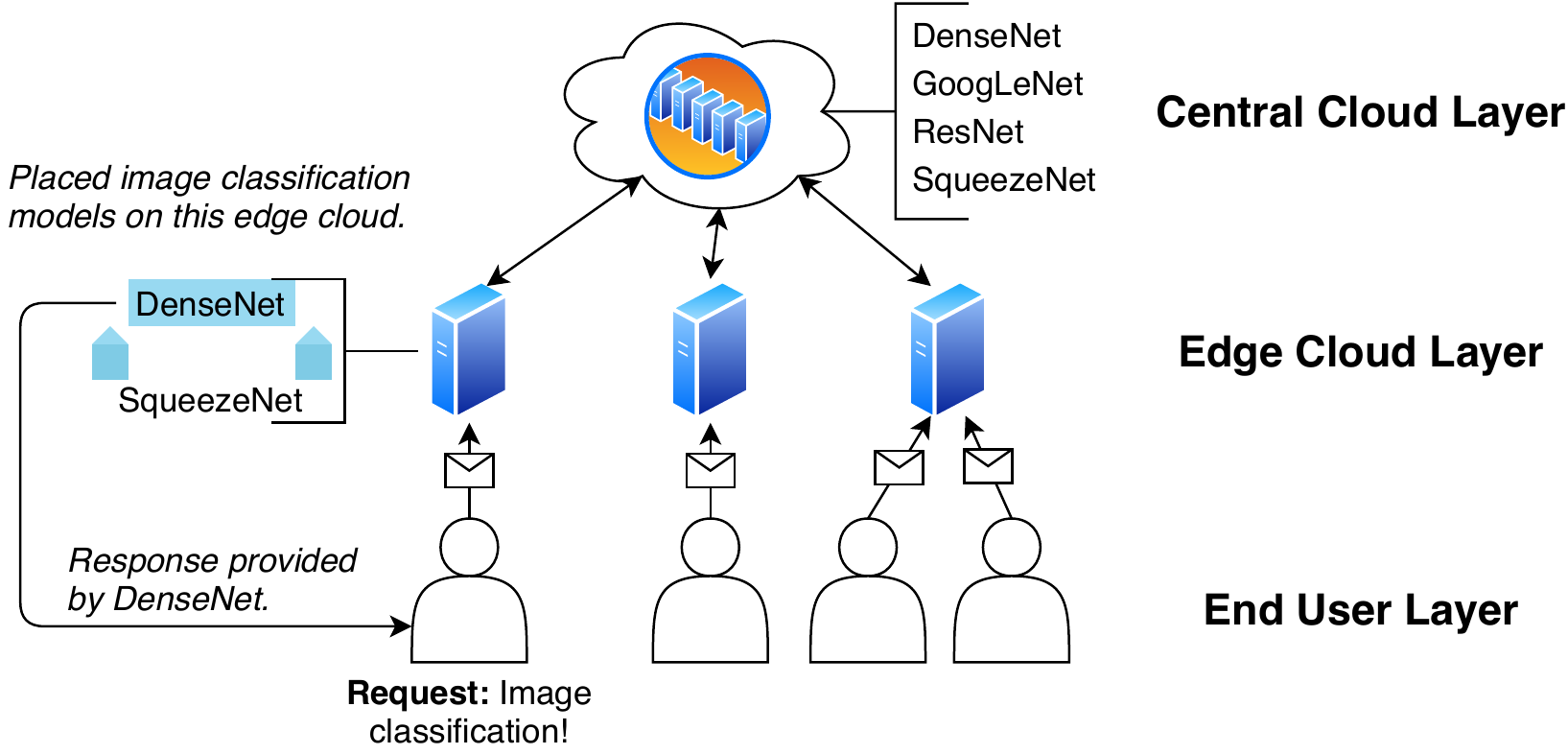}
    \caption{$3$-tier MEC architecture where an edge cloud has two image classification models to serve requests for that service.} 
    \label{fig:MEC-arch}
\end{figure}

To the best of our knowledge, this is the first EI service placement work that considers each service to have multiple implementations. We summarize our contributions as follows:
\begin{itemize}
	\item Introduce the \emph{Placement of Intelligent Edge Services}~(PIES) problem for optimal placement and scheduling of EI services with multiple implementations to maximize QoS provided by the edge.
	\item Prove that the PIES problem is NP-\emph{hard} and that its objective maximizes a monotone increasing, submodular set function under matroid constraints. 
	\item Propose two greedy algorithms: one with a $(1-1/e)$-approximation guarantee and another more efficient greedy algorithm that empirically matches this approximation algorithm with much greater efficiency.
	\item Our empirical results show that both greedy algorithms outperform their minimal approximation guarantee and each achieve an approximation of roughly $0.9$ on average.
\end{itemize}

\section{Related Works}
\label{sec:related_works}
\textbf{Service Placement.}
The problem of deciding which services to place on edge clouds in MEC is commonly known as the \emph{service placement} problem. Many works study the problem with the goal to optimize resource utilization, energy consumption~\cite{wang2014eqvmp, althamary2018popularity}, and Quality-of-Service~(QoS)~\cite{gao2019winning, mahmud2019quality}. Some works consider a static placement decision where the placement decision is made in a single shot~\cite{he2016service, turner2020meeting}. Other works focus on dynamic placement where placement decisions are made over some timespan~\cite{micro_cloud2, ouyang2019adaptive}. The case where edge clouds can share their resources with one another to collaboratively serve user requests has been studied by He et al. in~\cite{he2018s}. The existence of mobile end users has encouraged research studying placement with service migration, where a service processing a request migrates between edge clouds~\cite{ouyang2018follow}. For a recent and comprehensive survey on service placement, please refer to~\cite{salaht2020overview}.
\emph{Quality-of-Service}~(QoS) is very domain specific. Gao et al. in~\cite{gao2019winning} study maximizing QoS, defined as a function of latency, through a joint decision problem for both service placement and network selection where users can be served by more than one edge cloud. Skarlat et al. define FSPP, a QoS-aware service placement problem, as an ILP where QoS is defined as a function of application deadlines --- using the Gurobi solver to provide the optimal solution~\cite{skarlat2017towards}. Yousefpour et al. in~\cite{yousefpour2019fogplan} propose a dynamic fog computing framework, called FogPlan, for dynamic placement and release of services on fog nodes in IoT infrastructure. The authors consider QoS-aware service placement where QoS is considered exclusively w.r.t. delay latency and user's delay tolerance. Wang et al. in~\cite{wang2014eqvmp} study a similar problem wherein they focus on the placement of virtual machines for software-defined data centers to maximize energy efficiency and QoS. An application placement policy using a fuzzy logic-based approach is proposed by Mahmud et al. in~\cite{mahmud2019quality} that maximizes QoE. In this work, the authors consider QoE under a fuzzy framework comprised of sets for access rate, required resources, and processing times. Farhadi et al. in~\cite{khamfroush2021} study service placement and request scheduling on MEC systems for data-intensive applications. They pose the problem of service placement as a set optimization problem and provide an algorithm that demonstrates an approximation bound on optimal solutions.

\textbf{Edge Intelligence.}
The advent of pushing machine learning services to the edge led to the established field of \emph{edge intelligence}~(EI)~\cite{EI_survey2}. Due to the limited resource capacities of MEC environments, a central focus of EI is the design of models that are less costly in terms of resources to run~\cite{EI_survey3}. One proposed solution is to simply prune the elements comprising the deep neural network for the EI service (e.g., remove the number of neurons/units or entire layers)~\cite{prune1, prune2, prune3}. Another proposed idea is to consider a EI model's architecture being \emph{split} across different tiers of the MEC architecture (e.g., one half is run on the edge and the other on the central cloud)~\cite{EI_survey1}. Other works have studied optimizing the QoS provided by deep learning EI models on the edge.  Zhao et al. in~\cite{Xiaobo} study the trade-off between accuracy and latency for offloading decisions regarding deep learning services for provided optimal QoS on the edge through compression techniques. Hosseinzadeh et al. in~\cite{hosseinzadeh2020optimal} study the related problem of offloading and request scheduling in edge systems to maximize QoS for deep learning models under the assumption that placement of these models has been done a priori. There are few works that have formally studied the service placement problem specifically for EI models. Recently in 2021, Zehong, Bi, and Zhang study the problem of optimally placing EI services with the objective of optimizing energy consumption and completion time~\cite{lin2021optimizing}. 

Our work departs from the literature in that we focus on EI service placement where each service can have multiple implementations. To the best of our knowledge from surveying the literature, this is the first work to do so. A similar work by Hung et al. in~\cite{hung2018videoedge} considers multiple implementations of services. However, the focus of that work is on optimal query scheduling of video processing tasks rather than placement of services on edge clouds.

\section{System Model}
\label{sec:model}
\subsection{System Architecture Definition}
\label{sec:architecture}

We consider a 3-tier MEC architecture consisting of a central cloud, edge clouds, and end users (see Figure~\ref{fig:MEC-arch}). The central cloud hosts all model implementations for each service in the environment. However, the objective of this work is to maximize expected QoS provided by the edge clouds and thus we do not closely consider the central cloud. For this work, we focus on three aspects of this environment: edge clouds, end user requests, and available EI service models. For simplicity, we consider requests processed by the cloud to be dropped.

\textbf{Edge Clouds.}
We denote the set of edge clouds $\edges=\{1,\cdots,E\}$ where $E$ is the number of edge clouds. We consider edge clouds to be deployed computational devices that are associated with a wireless access point to connect to users. Each edge cloud is equipped with hardware resources to process the requests provided by end users. Specifically, we consider each edge cloud $e\in\edges$ to have resource capacities $K_e$, $W_e$, and $R_e$ for communication, computation, and storage capacities, respectively. For simplicity, we do not  consider the possibility of edge clouds collaborating to serve user requests. Thus, either a user's request is served by the user's covering edge cloud or is offloaded to the central cloud (dropped).

\textbf{User Requests.}
We denote the set of user requests as $\users=\{1,\cdots,U\}$ where $U$ is the number of user requests. For simplicity, we consider each user to make a single service request. For users that request multiple services, we represent this as separate user requests altogether. Each user is covered by some edge cloud, which will process their service request. We denote the edge cloud covering user~$u$ by $e_u$. Additionally, we denote the set of users an edge cloud~$e$ covers by $\users_e=\{u\in\users : e_u \equiv e\}$. The service some user~$u$ requests is denoted by $s_u$. When submitting a request, users also provide thresholds for accuracy and delay inform the MEC system for how to make decisions w.r.t. QoS. The accuracy threshold provided by user~$u$ is denoted by $\alpha_u\in[0,1]$; the delay threshold provided by user $u$ is denoted by $\delta_u\in[0,\delta_{\max}]$, where $\delta_{\max}$ represents the maximum possible delay. These thresholds are used to prioritize the needs of end users. For instance, some applications that use deep learning are more sensitive to inaccurate answers and others are more time-sensitive. For instance, a self-driving vehicle that uses object detection to detect nearby pedestrians would be more accuracy-sensitive and delay-sensitive than a game on a smartphone.

\textbf{Service Models.}
We consider a set of services that are available for users to request, denoted by $\services=\{1,\cdots,S\}$ where $S$ is the number of available services. 
We assume that there is at least 1 implementation for each service $s\in\services$. However, we also allow for EI services to be implemented by several different machine learning architectures. For brevity, we refer to a single service implementation as a ``service model" for short. The set of implemented service models for service~$s$ is denoted by $\model_s=\{1,\cdots,m_s\}$ where $m_s \geq 1$ is the number of models implemented for service~$s$. For simplicity, we also denote the set of all individual implemented service models by $\SM = \{(s,m) \;:\; \forall s\in\services, m\in\model_{s}\}$. Each service model~$(s,m)\in\SM$ is associated with an accuracy metric~$A_{sm}\in[0,1]$. We assume this value is provided by evaluation using some test data-set (as is typical in machine learning). We denote the expected delay for performing service model~$(s,m)$ for user~$u$ by $D_{sm}(u)$ --- this is defined more explicitly in \S\ref{sec:QoD}. Finally, we denote communication, computation, and storage costs for each service model $(s,m)\in\SM$ by $k_{sm}$, $w_{sm}$, and $r_{sm}$, respectively.

\subsection{Quality-of-Service~(QoS) Definition}
\label{sec:QoS}

We consider QoS for EI models to be comprised of two components: provided model accuracy and incurred delay. As mentioned earlier in \S\ref{sec:architecture}, each user request submitted to the system includes thresholds for requested minimum accuracy, $\alpha_{u}$, and requested maximum delay, $\delta_{u}$. As such, we use these threshold values to compute the expected QoS a service model~$(s,m)$ can provide to user~$u$. The formal definition is provided below in Eq.~\eqref{eq:QoS},

\begin{equation}
	Q(u, s, m) \triangleq 
	\begin{cases}
		\frac{1}{2}[\hat{a}_{sm}(u) + \hat{d}_{sm}(u)] & \text{if } s = s_u
		\\
		0 & \text{otherwise}
	\end{cases}
\label{eq:QoS}
\end{equation}

\noindent 
where $\hat{a}_{sm}(u)$ and $\hat{d}_{sm}(u)$ represent how much service model~$(s,m)$ satisfies user~$u$'s accuracy and delay thresholds, respectively. The summation of these two terms is multiplied by $1/2$ because the maximum possible values for both $\hat{a}_{sm}(\cdot)$ and $\hat{d}_{sm}(\cdot)$ is $1.0$, thus normalizing the range to $Q(\cdot) \in [0,1]$.

\subsubsection{Accuracy Satisfaction} 
\label{sec:QoA}

As stated, each user submits a minimum accuracy threshold, $\alpha_{u}\in[0,1]$, which indicates the amount of accuracy needed to satisfy them. This accuracy of a service model, $A_{sm}$, is a metric retrieved from model evaluation (as is standard with machine learning models). We define this as a nonlinear function in Eq.~\eqref{eq:QoA} below,

\begin{equation}
	\hat{a}_{sm}(u) 
	=
	\begin{cases}
		1 & \text{if } A_{sm} \geq \alpha_u
		\\
		\max(0, 1 - (\alpha_u - A_{sm})) & \text{otherwise}
	\end{cases}
\label{eq:QoA}
\end{equation}

\noindent
where the first case represents when a user's accuracy request has been met and the second case provides reduced satisfaction based on the difference between user-requested accuracy and the evaluated accuracy for service model~$(s,m)$.

\subsubsection{Delay Satisfaction}
\label{sec:QoD}

Similarly, each user submits a maximum delay threshold, $\delta_{u}\in[0,\delta_{\max}]$, to indicate the amount of accuracy they are willing to tolerate. If they receive a response for their request within $\delta_{\max}$ time units, then they are satisfied; otherwise, their satisfaction will degrade. The formal definition is provided below in Eq.~\eqref{eq:QoD},

\begin{equation}
	\hat{d}_{sm}(u) = 
	\begin{cases}
		1 
		& \text{if } D_{sm}(u) \leq \delta_u
		\\
		\max\Big(0, 1 - \frac{D_{sm}(u) - \delta_u}{\delta_{\max}}\Big) 
		& \text{otherwise}
	\end{cases}
\label{eq:QoD}
\end{equation}

\noindent
where $D_{sm}(u)$ is the expected delay from processing user $u$'s request using service model~$(s,m)$. This is defined below in Eq.~\eqref{eq:delay} as the sum of two terms of the transmission delay, $D_{sm}^{tran}(\cdot)$, and the computation delay, $D_{sm}^{comp}(\cdot)$. See below for the formal definition:
\begin{equation}
	D_{sm}(u) = 
	D_{sm}^{tran}(u) + D_{sm}^{comp}(u).
\label{eq:delay}	
\end{equation}

The transmission delay is a function of the communication cost of service model~$(s,m)$ and the communication capacity of user~$u$'s covering edge cloud, $e_{u}$. Additionally, the edge cloud's bandwidth is evenly shared across all of the users it covers --- see Eq.~\eqref{eq:tran-delay},
\begin{equation}
	D_{sm}^{tran}(u) 
	= 
	\frac{k_{sm}}
	     {K_{e_u}/|\users_{e_u}|} 
	=
	\frac{k_{sm} \; |\users_{e_u}|}
	     {K_{e_u}}
\label{eq:tran-delay}
\end{equation}

\noindent 
where $k_{sm}$ is the communication cost for service model $(s,m)$.
Similarly, computation delay is a function of the computation cost of service model~$(s,m)$ and user~$u$'s covering edge cloud, $e_{u}$, and its computation resources --- see Eq.~\eqref{eq:comp-delay},
\begin{equation}
	D_{sm}^{comp}(u) 
	= 
	\frac{w_{sm}}
	     {W_{e_u}/|\users_{e_u}|} 
	=
	\frac{w_{sm} \; |\users_{e_u}|}
	     {W_{e_u}}
\label{eq:comp-delay}
\end{equation}

\noindent
where $w_{sm}$ is the computation cost for service model~$(s,m)$. We assume that the an edge cloud's computation capacity is evenly shared across its covered users.

\section{Problem Definition}
\label{sec:PIES}
Given the system model, we now define the \emph{Placement for Intelligent Edge Services}~(PIES) problem. PIES performs service placement for EI models with multiple implementations with the goal of maximizing QoS, defined in Eq.~\eqref{eq:QoS}. On top of making placement decisions, the PIES problem also decides \emph{which} placed service model will serve a user's request if there are more implementations for a requested service available. For instance, given each user~$u$ makes a request for service~$s_u$, if $u$'s covering edge cloud~$e_u$ has had more than $1$ model of service $s_u$ placed on it, then the PIES problem will also select a placed model to process $u$'s request. 

\subsection{PIES Formulation}
The PIES problem is defined as an \emph{integer linear program}~(ILP) and consists of two types of decisions: \emph{(i)} model placement and \emph{(ii)} model scheduling. For the former, we consider a binary decision variable $\textbf{x}=(x_{e}^{sm})_{\forall e\in\edges, s\in\services, m\in\model_{s}}=1$ if service model~$(s,m)$ is placed on edge cloud~$e$, 0 otherwise. For the latter, we consider another binary decision variable $\textbf{y}=(y_{u}^{m})_{\forall u\in\users, m\in\model_{s_u}}=1$ if user~$u$'s service request is served by its covering edge cloud~$e_u$ with service model~$(s_u,m)$, 0 otherwise. We formally define the PIES problem below:

\begin{subequations}
	\label{eq:PIES}
	\small
	\begin{align}
		\max 
			& \sum_{u \in \users} \sum_{m \in \model_{s_u}} y_u^m Q(u, s_u, m)
		  	  \tag{\ref{eq:PIES}}
		 \\
		 \text{s.t.}
		 	& \sum_{m\in\model_{s_u}} y_u^m \leq 1 
	        & \forall u\in\users
	          \label{eq:PIES-edge_run}
	        \\
		 	& \sum_{s \in \services} \sum_{m \in \model_s} x_e^{sm} r_{sm} \leq R_e
		 	& \forall e \in \edges
		 	  \label{eq:PIES-storage}
		 	\\
		 	& y_u^m \leq x_{e_u}^{s_u m}
		 	& \forall u \in \users, m \in \model_{s_u}
		 	  \label{eq:PIES-y_le_x}
		 	\\
		 	& x_{e}^{sm} \in \{0,1\} 
		 	& \forall e\in\edges, (s,m)\in\SM
		 	\label{eq:PIES-x_binary}
		 	\\
		 	& y_{u}^{m} \in \{0,1\} 
		 	& \forall u \in \users, m \in \model_{s_u}
		 	  \label{eq:PIES-y_binary}
	\end{align}
\end{subequations}

The objective function is defined in Eq.~\eqref{eq:PIES} and maximizes the expected QoS provided by the edge clouds to all users. Constraint~\eqref{eq:PIES-edge_run} ensures that no more than $1$ model is used to process a user's request. Constraint~\eqref{eq:PIES-storage} guarantees that all edge clouds' storage capacities are not exceeded by the summation of the storage costs of their placed service models. Constraint~\eqref{eq:PIES-y_le_x} ensures that users can only be served if their covering edge cloud has placed at least $1$ implementation of their requested service. Finally, constraints~\eqref{eq:PIES-x_binary} and~\eqref{eq:PIES-y_binary} defines $\textbf{x}$ and $\textbf{y}$ as binary decision variables.

\subsection{PIES Problem Complexity \& Properties}
\label{sec:PIES_theory}

Next, we provide proofs related to the hardness of the PIES problem, as well as theoretical properties that can be used to provide approximation guarantees for algorithm design.

\begin{theorem}
    \label{theorem:QoS-np}
    The service model placement sub-problem of the PIES problem is NP-hard.
\end{theorem}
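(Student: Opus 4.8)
The plan is to reduce from a well-known NP-hard problem by restricting PIES to the case where each service has exactly one implementation, so that the scheduling decision $\mathbf{y}$ is forced and only the placement decision $\mathbf{x}$ remains. In that regime, constraint~\eqref{eq:PIES-edge_run} is vacuous, constraint~\eqref{eq:PIES-y_le_x} forces $y_u^{m} = x_{e_u}^{s_u m}$ for the unique model $m$ of service $s_u$, and the objective collapses to $\sum_{u} x_{e_u}^{s_u m} Q(u,s_u,m)$, i.e.\ we collect the QoS value of user $u$ exactly when the unique model of its requested service has been stored on its covering edge cloud, subject to the per-edge-cloud storage budget~\eqref{eq:PIES-storage}. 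This is precisely a (multiple) knapsack-type selection problem, and I would make the reduction from \textsc{Knapsack} (or, to make the decision version clean, from the NP-complete \textsc{Subset-Sum}).

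Concretely, given a \textsc{Knapsack} instance with items $i = 1,\dots,n$ of sizes $r_i$ and values $v_i$ and capacity $C$, I would build a PIES instance with a single edge cloud $e$ (so $e_u \equiv e$ for all users, and $|\users_e|$ is a fixed constant that only scales the delay terms uniformly), storage capacity $R_e = C$, and $n$ services each with a single model $(s_i,1)$ of storage cost $r_{s_i 1} = r_i$. For each service $s_i$ I would create one user $u_i$ requesting $s_i$, and I would choose that user's accuracy threshold $\alpha_{u_i}$ and delay threshold $\delta_{u_i}$ together with the model's accuracy $A_{s_i 1}$ and resource-induced delay $D_{s_i 1}(u_i)$ so that $Q(u_i,s_i,1)$ equals (a suitably scaled copy of) $v_i$; since $Q$ ranges in $[0,1]$ I would first normalise the values, $v_i' = v_i / \sum_j v_j$, and then pick, e.g., $A_{s_i 1} = \alpha_{u_i}$ and $D_{s_i1}(u_i) = \delta_{u_i} + \delta_{\max}(1 - 2v_i')$ (clipped to be nonnegative, adjusting $\delta_{\max}$ if needed) so that $\hat a_{s_i1}(u_i) = 1$ and $\hat d_{s_i 1}(u_i) = 2v_i' - 1$... more simply, set $\hat a = \hat d = v_i'$ by choosing $A_{s_i1} = \alpha_{u_i} - (1-v_i')$ and $D_{s_i1}(u_i) = \delta_{u_i} + \delta_{\max}(1-v_i')$, giving $Q(u_i,s_i,1) = v_i'$. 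Then a placement $\mathbf{x}$ is feasible iff the chosen set of items fits in capacity $C$, and its objective value is $\frac{1}{\sum_j v_j}$ times the total value of that item set; hence an optimal PIES solution yields an optimal knapsack solution, and the reduction is clearly polynomial-time. For the decision version I would instead use \textsc{Subset-Sum} with $v_i = r_i$ and ask whether QoS $\ge C / \sum_j r_j$ subject to storage $\le C$, which is satisfiable iff some subset sums exactly to $C$.

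The main obstacle, and the part requiring the most care, is the encoding step: the QoS function $Q$ is not a free parameter but is determined through Eqs.~\eqref{eq:QoS}--\eqref{eq:comp-delay} by the accuracy/delay thresholds, the model accuracies, and the resource costs, so I must verify that the target values $v_i'$ are actually realisable by legal choices of $\alpha_u \in [0,1]$, $\delta_u \in [0,\delta_{\max}]$, $A_{sm}\in[0,1]$, and nonnegative $k_{sm}, w_{sm}$, and that these encodings have polynomial bit-length. A secondary subtlety is that with several users sharing one edge cloud the delay terms~\eqref{eq:tran-delay}--\eqref{eq:comp-delay} depend on $|\users_e|$, which is fixed once the instance is built, so no circularity arises; alternatively I can sidestep this entirely by giving each user its own edge cloud and instead reducing to \textsc{Multiple Knapsack}, but the single-edge-cloud reduction from ordinary \textsc{Knapsack}/\textsc{Subset-Sum} is the cleanest. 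Once realisability and polynomiality of the encoding are checked, NP-hardness of the placement sub-problem — and hence of PIES, which contains it as a special case — follows immediately.
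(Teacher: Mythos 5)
Your proposal is correct and starts from the same place as the paper's own proof: both reduce \textsc{0/1 Knapsack} to a PIES instance with a single edge cloud, one model per service, storage capacity equal to the knapsack capacity $C$, and storage costs equal to the item weights, so that constraint~\eqref{eq:PIES-storage} becomes the capacity constraint and the scheduling decision is forced. Where you diverge is the value gadget. The paper creates $v_i$ users requesting service $i$ and relaxes every threshold ($\alpha_u=0$, $\delta_u=\delta_{\max}$) so that each served user contributes exactly $1$ to the objective; the item value is thus encoded as a user multiplicity. You instead create one user per item and tune $\alpha_{u_i}$, $A_{s_i 1}$, $\delta_{u_i}$ and the resource costs so that $Q(u_i,s_i,1)=v_i/\sum_j v_j$, encoding the value directly in the QoS. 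Your gadget costs you the realisability check you rightly flag, but that check does go through: take $\alpha_{u_i}=1$, $A_{s_i 1}=v_i'$, $\delta_{u_i}=0$, $K_1=W_1=n$, and $k_{s_i 1}=w_{s_i 1}=\delta_{\max}(1-v_i')/2$, all nonnegative rationals of polynomial bit-length, which yield $\hat{a}_{s_i 1}(u_i)=\hat{d}_{s_i 1}(u_i)=v_i'$ and hence $Q=v_i'$ via Eqs.~\eqref{eq:QoS}--\eqref{eq:QoD}. What it buys is a reduction whose instance size is polynomial in the bit-length of the values: the multiplicity encoding produces $\sum_i v_i$ users, which is only polynomial when the values are polynomially bounded --- precisely the regime in which \textsc{Knapsack} admits a polynomial-time dynamic program over values. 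So your encoding is the more robust of the two, and your suggestion to run the decision version through \textsc{Subset-Sum} is a clean way to finish.
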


\begin{proof}
    We prove Theorem~\ref{theorem:QoS-np} using a reduction from the \textsc{0/1 Knapsack} problem, which is one of Karp's 21 classical problems proven to be NP-\emph{complete}~\cite{knapsack1}.
    To review, the \textsc{0/1 Knapsack} problem considers $1,\cdots,n$ items with each item having a weight cost $c_i$ and a value $v_i$, as well as a maximum capacity $C$ \draft{for the knapsack}. The problem's objective is to maximize $\sum_{i=1}^{n} v_i x_i$ subject to $\sum_{i=1}^{n} w_i x_i \leq C$ and $x_i\in\{0,1\}$. Here, $x_i=1$ if and only if the $i^{th}$ item is selected to be placed in the knapsack.
    
    We can reduce the 0/1 \textsc{Knapsack} problem to the PIES problem as follows. Suppose that there are $|\services|=n$ available services in the MEC system and only $1$ model per service, i.e., $|\model_{s}|=1\;(\forall s \in \services)$. Each service $i$ has $v_i$ users requesting it --- meaning there are $|\users|=\sum_{i=1}^{n} v_i$ users in total. Suppose there is $|\edges|=1$ edge cloud in the MEC system with storage capacity $R_1=C$, communication capacity $K_1=\infty$, and computation capacity $W_1=\infty$. Let all users be covered by this $1$ edge cloud, such that $e_u=1\;(\forall u \in \users)$. Let the storage costs associated with each service model be $r_{s1}=c_s\;(1 \leq s \leq n)$ and the communication/computation costs $k_{s1}=1,\,w_{s1}=1\;(1 \leq s \leq n)$. We assume that the QoS requirements of all users are relaxed, meaning that $\alpha_u=0.0,\delta_u=\delta_{\max}\;(\forall u\in\users)$. Then we claim that the 0/1 \textsc{Knapsack} problem is feasible if and only if we can maximize expected QoS across all users, i.e., the optimal decision variable of the constructed PIES instance equals $x_{1}^{i1} \equiv x_i$. First, given the optimal solution to the 0/1 \textsc{Knapsack}, placing the services corresponding to the decisions in $x_i$ on the single edge cloud in our constructed instance gives the optimal solution to the PIES problem that maximizes QoS by maximizing the number of user requests served on the edge. Moreover, given an optimal solution to the PIES problem that maximizes QoS across all users, placing the corresponding items in the knapsack  gives an optimal solution to the 0/1 \textsc{Knapsack} problem. Given the decision problem of the 0/1 \textsc{Knapsack} problem is NP-\emph{complete}, it follows that the PIES problem is NP-\emph{hard}. This concludes the proof.
\end{proof}


\begin{theorem}
	Given a service model placement, $\textbf{x}$, the optimal solution to the PIES model scheduling sub-problem is given by a greedy algorithm.
\label{theorem:opt_scheduling}
\end{theorem}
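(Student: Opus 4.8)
The plan is to show that once the placement $\textbf{x}$ is fixed, the model scheduling sub-problem decomposes completely over users, so that each user's contribution to the objective can be optimized independently. Observe that the objective in Eq.~\eqref{eq:PIES} is a sum over users, $\sum_{u\in\users}\sum_{m\in\model_{s_u}} y_u^m\,Q(u,s_u,m)$, and that the only constraints involving $\textbf{y}$ are constraint~\eqref{eq:PIES-edge_run}, which acts on each user $u$ separately, and constraint~\eqref{eq:PIES-y_le_x}, which for each user $u$ simply restricts the feasible set of models to those actually placed on $e_u$. Crucially, there is no constraint that couples the $\textbf{y}$ decisions of two distinct users: the communication and computation capacities $K_e, W_e$ do not appear as constraints in the ILP (they enter only through the delay term $D_{sm}(u)$, which depends on $|\users_{e_u}|$, a quantity fixed independently of the scheduling decision), and the storage constraint~\eqref{eq:PIES-storage} involves only $\textbf{x}$.

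Given this decomposition, the plan is: first, for each user $u$, define the set of \emph{admissible} models $\model_u' = \{\, m\in\model_{s_u} : x_{e_u}^{s_u m} = 1 \,\}$. By constraint~\eqref{eq:PIES-y_le_x}, a feasible schedule can only set $y_u^m = 1$ for $m\in\model_u'$; by constraint~\eqref{eq:PIES-edge_run}, at most one such $m$ may be chosen. Since $Q(u,s_u,m)\ge 0$ for all $m$, an optimal schedule for user $u$ either leaves $u$ unserved (contributing $0$, which is optimal only if $\model_u'=\emptyset$) or picks the single admissible model maximizing $Q$. Second, formalize the greedy rule: for each $u$, set $y_u^{m^\star}=1$ where $m^\star \in \argmax_{m\in\model_u'} Q(u,s_u,m)$ (and $y_u^m=0$ otherwise), leaving $u$ unserved if $\model_u'=\emptyset$. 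Third, argue optimality: any feasible $\textbf{y}$ has total objective $\sum_u \sum_{m} y_u^m Q(u,s_u,m) \le \sum_u \max_{m\in\model_u'} Q(u,s_u,m)$ (using nonnegativity of $Q$ and the at-most-one constraint per user), and the greedy schedule attains this upper bound; hence it is optimal.

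The main obstacle — really the only subtlety — is justifying that the delay term $D_{sm}(u)$, and hence $Q(u,s_u,m)$, does not depend on the scheduling variables $\textbf{y}$. One must point out that in this model $|\users_{e_u}|$ counts all users \emph{covered} by $e_u$ (a fixed assignment, via $e_u$), not the users actually \emph{served}, so $D_{sm}(u)$ given by Eq.~\eqref{eq:delay}--\eqref{eq:comp-delay} is a constant once $\textbf{x}$ and the coverage structure are fixed. With that established, $Q(u,s_u,m)$ is a fixed nonnegative constant for each $(u,m)$, the objective is separable, and the per-user problem ``choose at most one item of maximum value'' is trivially solved by the greedy choice. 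I would write the argument as: (1) state the decomposition and nonnegativity of $Q$; (2) give the upper bound $\sum_u \max_{m\in\model_u'} Q(u,s_u,m)$; (3) exhibit the greedy schedule and verify feasibility against~\eqref{eq:PIES-edge_run} and~\eqref{eq:PIES-y_le_x}; (4) note it meets the bound, so it is optimal.
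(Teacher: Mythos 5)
Your proof is correct, but it takes a genuinely different route from the paper. The paper proves this theorem by constructing an auxiliary weighted multigraph (a root node joined to user nodes, each user node joined by parallel links---one per placed implementation of its requested service---to a service node, weighted by $Q$) and observing that optimal scheduling corresponds to a maximum spanning tree of this multigraph, obtainable greedily via Kruskal's algorithm on negated weights. You instead argue directly from separability: once $\textbf{x}$ is fixed, the objective is a sum over users, the only constraints touching $\textbf{y}$ (the at-most-one constraint~\eqref{eq:PIES-edge_run} and the placement-consistency constraint~\eqref{eq:PIES-y_le_x}) act on each user independently, and $Q(u,s_u,m)\ge 0$ is a constant because the delay term depends on the fixed coverage count $|\users_{e_u}|$ rather than on which users are actually served---so the per-user $\argmax$ attains the obvious upper bound $\sum_u \max_{m\in\model'_u} Q(u,s_u,m)$. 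Your argument is more elementary, makes explicit the one subtlety the paper glosses over (that $D_{sm}(u)$, and hence $Q$, does not depend on $\textbf{y}$), and directly justifies the per-user greedy rule that the paper's OMS algorithm (Algorithm~\ref{alg:greedy-scheduling}) actually implements; the paper's multigraph framing buys a visual reduction to a classical greedy-solvable problem (MST), but the spanning-tree structure there ultimately collapses to the same ``pick the heaviest parallel edge per user'' decision you derive directly.
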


To prove Theorem~\ref{theorem:opt_scheduling}, we consider an auxiliary, undirected multigraph and discuss its construction. Consider a multigraph $\mathcal{G}=(V,L,f)$ where $V$ is the set of nodes, $L \subseteq |V| \times |V|$ the set of links/edges (hereafter we will refer to as \emph{links}), and $f:L \rightarrow \{\{u,v\} : u,v \in V\}$ is the link identifier function. Consider the node set $V$ to be composed of 3 sets of nodes: a set containing a single \emph{root node} $V_R$, a set of \emph{user nodes} $V_U$, and a set of \emph{service nodes} $V_S$. Thus, we say $V = V_R \cup V_U \cup V_S$. We define $V_U$ as the set of users with at least 1 model placed on their edge cloud for their requested service, i.e., $V_U = \{u\in\users : \sum_{m \in M_{s_u}} x_{e_u}^{s_um} > 0\}$. 
%
We define $V_S$ as the set of requested services for each user $u \in V_U$ that have had at least 1 service model placed on their covering edge cloud, i.e., $V_S = \{s_u : \forall u \in V_U\}$. We note that we allow for duplicate services such that each user $u \in V_U$ has its own service node.
%
The set of links, $L$, contains links between node sets $V_R$, $V_U$ and $V_U$, $V_S$. First, there is 1 edge between the root node, $\phi \in V_R$, and each of the user nodes $u \in V_U$, i.e., $\{(\phi, u) : \forall u \in V_U\} \subseteq L$. 
%
Then, a user node $u \in V_U$ can have multiple links to its corresponding service node $s_u \in V_S$. The number of links between $u \in V_U$ and $s_u \in V_S$ is equal to $\sum_{m\in\model_{s_u}} x_{e_u}^{s_um}$, which is the number of model implementations for $s_u$ placed on $u$'s covering edge cloud,~$e_u$. 
Finally, each link is weighted. Links between the root node $\phi \in V_R$ and user nodes $u \in V_U$ are weighted by $0$. Links between each user nodes $u \in V_U$ and its requested service nodes $s_u$ are weighted by the expected $Q(u,s,m)$ where $m$ corresponds with the model type of the respective link between $u$ and $s_u$. A simple example of a multigraph constructed by this method is provided in Fig.~\ref{fig:aux_graph}.

\begin{figure}
	\centering
	\includegraphics[width=0.85\linewidth]{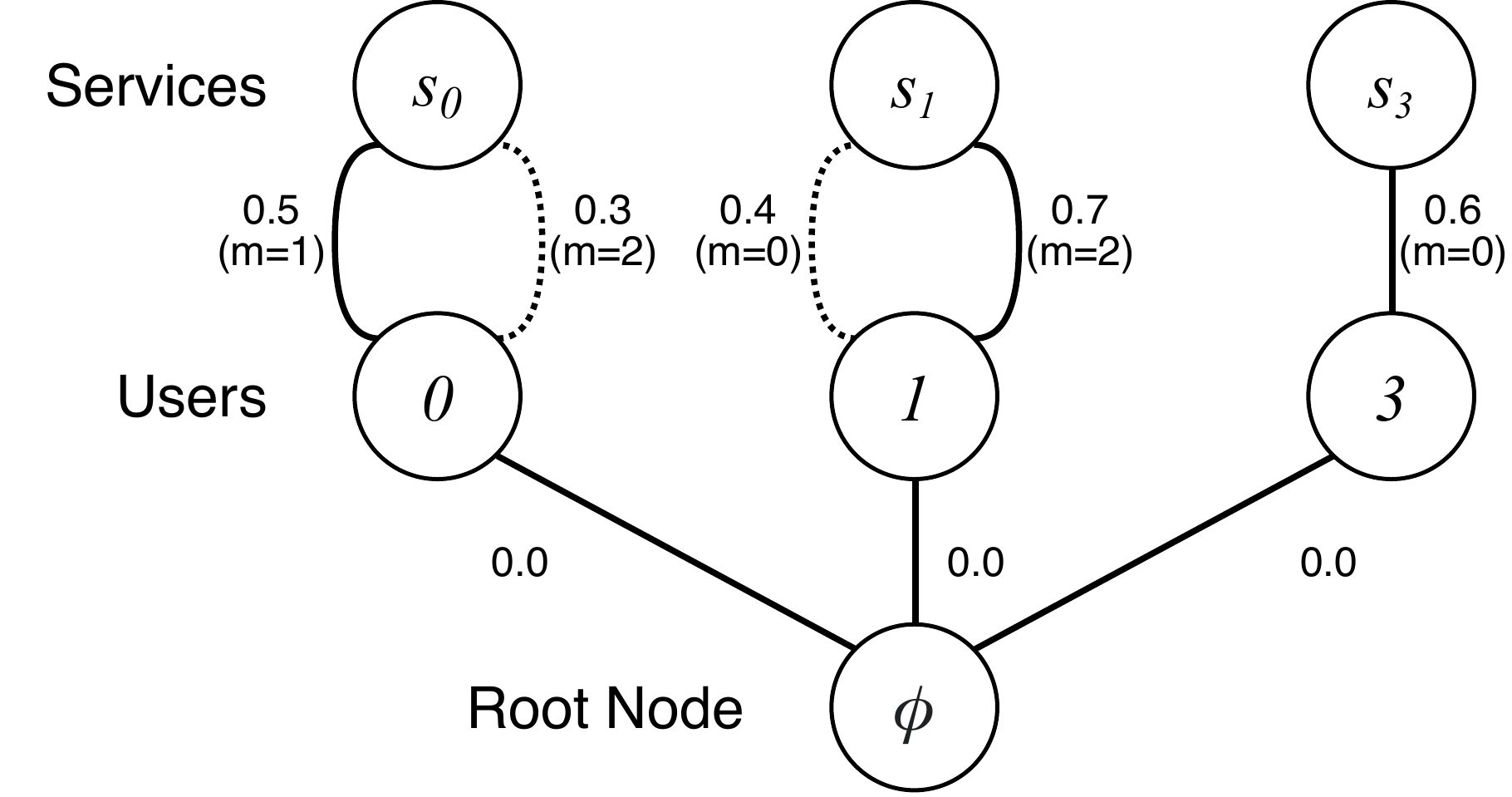}
	\caption{
	    Example auxiliary multigraph representing PIES scheduling sub-problem. Links between user/service nodes correspond to an model implementation for a user's requested service placed on the user's covering edge. Float values correspond with QoS weights. Solid links compose the maximum spanning tree for optimal model scheduling.
	}
	\label{fig:aux_graph}
\end{figure}

\begin{proof}
	Given a service placement decision, $\textbf{x}$, we can prove Theorem~\ref{theorem:opt_scheduling} by constructing an auxiliary multigraph that represents the problem space for the PIES model scheduling sub-problem as described above and shown in Fig.~\ref{fig:aux_graph}. With this graph constructed, it is easy to verify that the optimal solution to the model scheduling sub-problem is equivalent to finding a maximum spanning tree~(MST) --- which can be found by applying a greedy algorithm for minimum spanning trees (e.g., Kruskal's algorithm~\cite{kruskal1956shortest}) and negating all of the edge weights. This concludes the proof.
\end{proof}


\textbf{Proving Submodularity.} 
In order to provide an approximation guarantee for the NP-\emph{hard} PIES placement sub-problem, we prove the PIES service placement sub-problem is maximizing a monotone submodular set function, we rewrite our problem as a set optimization. Let $P(\textbf{x}) \triangleq \{(e,(s,m)) \in \edges \times \SM : x_{e}^{sm} = 1\}$ denote the set of service model placements according to decision variable $\textbf{x}$, where $(e,(s,m))$ means service model~$(s,m)$ is placed on edge cloud~$e$. Next, let $\sigma(P(\textbf{x}))$ denote the optimal objective value of PIES for a given $\textbf{x}$. By writing $P(\textbf{x})$ as $P$, we can rewrite the PIES placement sub-problem as:

\begin{subequations}
	\label{eq:matroid}
	\begin{align}
		\max~
			& \sigma(P)
			\tag{\ref{eq:matroid}}
		\\
		\text{s.t. }	
			& \sum_{(e,(s,m)) \in P \cap P_{e}} r_{sm} \leq R_{e}
			& \forall e \in \edges
			\label{eq:matroid_c1}
			\\
			& P \subseteq \edges \times \SM
			\label{eq:matroid_c2}
	\end{align}
\end{subequations}

\noindent where $P_{e} \triangleq \{e\}\times\{(s,m)\in\SM : r_{sm} \leq R_e\}$ is the set of all possible \emph{single} service model placements at edge cloud~$e$. From here, we observe the following:
\begin{itemize}
	\item \emph{Matroid constraint:} Let $\mathcal{I}$ be the collection of all $P$ satisfying constraints~\eqref{eq:matroid_c1}, \eqref{eq:matroid_c2}. It is then easy to verify that $\mathbb{M}=(\edges\times\SM, \mathcal{I})$ is a matroid. This is known as the \emph{partition matroid} as $\{P_{e}\}_{e\in\edges}$ is a partition of the ground set $(\edges\times\SM)$.
	\item \emph{Monotone submodular objective function:} We show that the objective function~\eqref{eq:matroid} has the following properties.
\end{itemize}

\begin{theorem}
	Function $\sigma(P)$ is monotone increasing and submodular for any feasible $P$.
\label{theorem:submodular}
\end{theorem}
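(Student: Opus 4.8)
The plan is to first reduce $\sigma(P)$ to a closed form by exploiting the separability of the scheduling sub-problem. Given a placement $P$ (equivalently $\textbf{x}$), the scheduling objective $\sum_{u}\sum_{m\in\model_{s_u}} y_u^m\, Q(u,s_u,m)$ is maximized subject only to $\sum_m y_u^m \le 1$ and $y_u^m \le x_{e_u}^{s_u m}$. Crucially, $Q(u,s_u,m)$ depends only on $u$, $s_u$, $m$ and on $|\users_{e_u}|$ (through the delay terms in Eq.~\eqref{eq:delay}), none of which are placement decisions, so each $Q(u,s_u,m)$ is a fixed nonnegative constant. Hence the scheduling problem decouples across users, and — consistently with the MST characterization of Theorem~\ref{theorem:opt_scheduling} — the optimum is
\[
\sigma(P) = \sum_{u\in\users} g_u(P),
\qquad
g_u(P) \triangleq \max\Big(\{0\}\cup\{\,Q(u,s_u,m) : (e_u,(s_u,m))\in P\,\}\Big),
\]
where we regard $\sigma$ as defined on all of $2^{\edges\times\SM}$ via this formula.

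Second, I would establish the two properties term by term. For monotonicity: if $P\subseteq P'$, then for every user $u$ we have $\{m : (e_u,(s_u,m))\in P\}\subseteq\{m : (e_u,(s_u,m))\in P'\}$, so $g_u$ is a maximum taken over a larger set and $g_u(P)\le g_u(P')$; summing over $u$ gives $\sigma(P)\le\sigma(P')$. For submodularity, since a sum of submodular functions is submodular it suffices to show each $g_u$ satisfies $g_u(A\cup\{p\}) - g_u(A) \ge g_u(B\cup\{p\}) - g_u(B)$ for all $A\subseteq B$ and $p\notin B$. If $p$ is not the placement of some model of $s_u$ on $e_u$, both marginal gains are $0$. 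Otherwise $p=(e_u,(s_u,m'))$, and writing $a=g_u(A)\le g_u(B)=b$ the marginal gains are $\max(0,\,Q(u,s_u,m')-a)$ and $\max(0,\,Q(u,s_u,m')-b)$ respectively; since $a\le b$ the former dominates the latter, which proves the claim.

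The main obstacle is really the first step: one must carefully justify that the scheduling optimum separates per user and that $Q(u,s_u,m)$ is placement-independent (in particular that the egalitarian bandwidth/compute sharing fixes $|\users_{e_u}|$ regardless of which models are placed), so that $\sigma$ is literally a sum of per-user ``best available model'' terms. Once that reduction is in hand, monotonicity and submodularity follow from the elementary observation above that $P\mapsto\max$ of the nonnegative weights of the selected ground-set elements is monotone submodular.
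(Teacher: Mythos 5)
Your proposal is correct and follows essentially the same route as the paper: decompose $\sigma$ into per-user terms $\sigma_u(P)=\max(\{0\}\cup\{Q(u,s_u,m):(e_u,(s_u,m))\in P\})$, argue monotonicity from the max over a growing set, establish submodularity of each per-user term, and invoke closure of submodularity under summation. Your closed-form marginal gain $\max(0,\,Q(u,s_u,m')-g_u(\cdot))$ is a tidier packaging of the paper's three-case analysis, and your explicit remark that $Q(u,s_u,m)$ is placement-independent (since $|\users_{e_u}|$ is fixed by coverage, not by placement) makes precise a point the paper leaves implicit, but the argument is the same.
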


\begin{proof}
	First, adding an element $(e,(s,m))$ to $P$ corresponds to adding new possible users to serve or new links in the auxiliary multigraph $\mathcal{G}$ (see Fig.~\ref{fig:aux_graph}). Under either outcome, the objective value, Eq.~\eqref{eq:PIES}, will either increase or remain unchanged. Thus it is sufficient to say that $\sigma(P)$ is monotone increasing.	
	
	The PIES service model placement sub-problem is submodular iff for every $A,B \subseteq \edges\times\SM$ where $A \subseteq B$ and some $p \notin B$, the following condition holds: $\sigma(A\cup\{p\}) - \sigma(A) \geq \sigma(B\cup\{p\}) - \sigma(B)$. We can define the objective value under optimal scheduling given a set of placements~$P$ as
	\begin{equation}
		\sigma(P) \triangleq \sum_{u\in\users} \sigma_{u}(P)
	\label{eq:sigma}
	\end{equation}
	where $\sigma_{u}(P)$ is the optimal QoS provided to user~$u$ with service model placements~$P$. We can define $\sigma_{u}(P)$ as follows:
	\begin{equation}
		\sigma_{u}(P) \triangleq \max_{(e,(s,m)) \in P} 
		\{
			Q(u,s,m) : e = e_u \wedge s = s_u
		\}
		\cup
		\{
			0
		\}.
	\label{eq:sigma_u}
	\end{equation}
	Next, we claim that for every user $u\in\users$ the following holds: $\sigma_{u}(A\cup\{p\}) - \sigma_{u}(A) \geq \sigma_{u}(B\cup\{p\}) - \sigma_{u}(B)$. We can verify this by elaborating on what these expressions represent and by exploiting their definitions. First, by definition, $\sigma_{u}(A\cup\{p\}) - \sigma_{u}(A)$ represents the increase $\{p\}$ provides to the objective. It should be noted that $\{p\}$ can only provide increase if its provided QoS for $u$ is greater than what $A$ could already provide. Additionally, we note $\sigma_u(A\cup\{p\}) \in \{\sigma_{u}(A), \sigma_{u}(\{p\})\}$. 
	We claim this because $\sigma_u(A\cup\{p\}) \equiv \sigma_{u}(A)$ if $\sigma_{u}(A) \geq \sigma_{u}(\{p\})$, meaning an already placed model in $A$ is scheduled to serve $u$'s request, or $\sigma_u(A\cup\{p\}) \equiv \sigma_{u}(\{p\})$ if $\sigma_{u}(A) < \sigma_{u}(\{p\})$, meaning the new service model placed by the new placement~$\{p\}$ is scheduled to serve $u$'s request. 
	These observations similarly hold for $B$ as well. Now, we prove the following inequality $\sigma_{u}(A\cup\{p\}) - \sigma_{u}(A) \geq \sigma_{u}(B\cup\{p\}) - \sigma_{u}(B)$ directly by its possible cases:\footnote{Note the case that $\sigma_{u}(A\cup\{p\}) \equiv \sigma_{u}(A)$ and $\sigma_{u}(B\cup\{p\}) > \sigma_{u}(B)$ can never occur. This is due to the fact that $A \subseteq B$ and thus if $\{p\}$ provides greater QoS for user~$u$ than $B$, then it follows that is also true for $A$.}
	\begin{itemize}
		\item \textsc{Case 1.} $\sigma_{u}(A\cup\{p\}) \equiv \sigma_{u}(A)$ and $\sigma_{u}(B\cup\{p\}) \equiv \sigma_{u}(B)$.
				It is easy to verify that $\sigma_{u}(A\cup\{p\}) - \sigma_{u}(A) \geq \sigma_{u}(B\cup\{p\}) - \sigma_{u}(B)$ becomes $0 \geq 0$, which holds.
		\item \textsc{Case 2.} $\sigma_{u}(A\cup\{p\}) > \sigma_{u}(A)$ and $\sigma_{u}(B\cup\{p\}) \equiv \sigma_{u}(B)$.
				It is easy to verify that the lefthand side of the original inequality becomes a value $>0$ and the righthand side becomes $0$, thus the inequality holds.
		\item \textsc{Case 3.} $\sigma_{u}(A\cup\{p\}) > \sigma_{u}(A)$ and $\sigma_{u}(B\cup\{p\}) > \sigma_{u}(B)$. 
				Since $A \subseteq B$, any service model placed under $A$ is also placed under $B$. Intuitively, any increase to QoS for user~$u$ provided by $A\cup\{p\}$ can be matched by $B\cup\{p\}$ because $B$ has every service model to serve $u$'s request that $A$ has. Additionally, $B$ could also have service models that provide greater QoS for $u$ than $A$ due to it having more service models placed. Thus, it must follow that the original inequality holds because $A$'s increase in QoS for $u$ is always at least as large as $B$'s increase.
	\end{itemize}

	\noindent 
	Thus, $\sigma_{u}(\cdot)$ is submodular for every $u\in\users$. Since any function that is a summation of submodular functions is also submodular~\cite{submodular_seminal}, it then follows that $\sigma(\cdot)$ is submodular. This concludes the proof.
\end{proof}

\section{Efficient Algorithm Design}
\label{sec:algorithm}

\begin{algorithm}[t]
	\footnotesize
    \SetKwData{Left}{left}\SetKwData{This}{this}\SetKwData{Up}{up}
    \SetKwInOut{Input}{Input}\SetKwInOut{Output}{Output}
    
    \Input{Service placement $(\textbf{x})$, Input parameters of~\eqref{eq:PIES}}
    \Output{Model scheduling $(\textbf{y})$}
	
	Initialize $\textbf{y} \gets (y_u^m=0)_{\forall u\in\users, m\in\model}$\;
    \ForEach{user $u\in\users$}{
    	\If{$\sum_{m\in\model_{s_u}} x_{e_u}^{s_um} > 0$}{
    		$m^* \gets \argmax\limits_{m\in\model_{s_u}} \{ Q(u, s_u, m), \forall u\in\users_e : x_{e_u}^{s_um} = 1\}$\;
    		$y_u^{m^*} \gets 1$\;
    	}
    }
    
    \Return \textbf{y}\;

\caption{Optimal Model Scheduling~(OMS)}
\label{alg:greedy-scheduling}
\end{algorithm}

\subsection{Scheduling Sub-Problem}
\label{sec:scheduling_alg}
In \S\ref{sec:PIES_theory}, we proved the PIES model scheduling sub-problem can be optimally solved with a greedy solution (see Theorem~\ref{theorem:opt_scheduling}). As such, we introduce a simple greedy algorithm, \emph{Optimal Model Scheduling}~(OMS). The pseudocode is provided in Algorithm~\ref{alg:greedy-scheduling}. OMS works in a straightforward fashion. Given a service placement decision and the PIES input parameters, OMS iterates through each user $u\in\users$ and if there is at least one model of their requested service, $s_u$, on their covering edge cloud, $e_u$, then OMS selects the model that provides the greatest QoS to user~$u$. The runtime for OMS is $O(|\users||\model_{s}^{\max}|)$ where $|\model_{s}^{\max}|=\max_{s\in\services}(|M_{s}|)$.

\subsection{Placement Sub-Problem}
\label{sec:placement_alg}

Here, we introduce two algorithms that can be used to solve the service model placement sub-problem for PIES. The first is an approximation algorithm that exploits the theoretical properties of the PIES objective (discussed in \S\ref{sec:PIES_theory}) to achieve an approximately optimal solution. The latter algorithm mimics some of the logic of this algorithm while reducing computational heft.

\subsubsection{Approximation Algorithm}
\label{sec:AGP}

Because the PIES problem aims to maximize a monotone increasing, submodular set function under matroid constraints (see Theorem~\ref{theorem:submodular}), a standard greedy algorithm can provide a $(1-1/e)$-approximation of the optimal solution~\cite{calinescu2007maximizing}. As such, we introduce \emph{Approximate Greedy Placement}~(AGP) which serves as an approximation algorithm for the PIES placement sub-problem. Its pseudocode is provided in Algorithm~\ref{alg:simple_greedy_placement}. AGP iterates through each edge cloud $e\in\edges$ and, in each iteration, it finds the set of service models $(s,m) \in \SM$ that can be placed on $e$ without violating the storage capacity constraint. It then computes the objective value using optimal model scheduling via Eq.~\eqref{eq:sigma} to find the immediate best choice. Once there are no more legitimate choices to choose from, it moves on to the next edge cloud. Once iteration through edge clouds is finished, it converts the placement decisions (represented as a set) into the standard format for the decision variable. However, AGP's runtime is not desirable due to its need to compute optimal scheduling for each possible option at each iteration. 


\begin{algorithm}[t]
	\footnotesize
	\SetKwData{Left}{left}\SetKwData{This}{this}\SetKwData{Up}{up}
    \SetKwInOut{Input}{Input}\SetKwInOut{Output}{Output}
	
	\Input{Input parameters of~\eqref{eq:PIES}}
	\Output{Service placement $(\textbf{x})$}
	
	Initialize $\textbf{x} \gets (x_e^{sm}=0)_{\forall e\in\edges, s\in\services, m\in\model}$\;
	$P \gets \{\}$    \tcp*{Placement decisions $\forall e \in \edges$.} 
	\ForEach{$e \in \edges$}{
		$\hat{P} \gets \{\}$    \tcp*{Placement decisions for this $e$.} 
		$\hat{R} \gets R_{e}$\;
		\Repeat{$|L \setminus \hat{P}| = 0$}{
			$L \gets \{(s, m) \in \SM \setminus \hat{P} : r_{sm} \leq \hat{R}\}$\;
			$s^{*}, m^{*} \gets \argmax\limits_{(s,m) \in L} \sigma(P\cup \{(e, (s, m))\})$\;
			$\hat{P} \gets (s^*,m^*)$\;
			$P \gets P \cup \{(e, (s^{*}, m^{*}))\}$\;
			$\hat{R} \gets \hat{R} - r_{s^{*}m^{*}}$\;
		}
	}
	\ForEach{$(e, (s, m)) \in P$}{
		$x_{e}^{sm} \gets 1$\;
	}
	\Return $\textbf{x}$\;
	
\caption{Approx. Greedy Placement (AGP)}
\label{alg:simple_greedy_placement}
\end{algorithm}
\begin{algorithm}[t]
	\footnotesize
    \SetKwData{Left}{left}\SetKwData{This}{this}\SetKwData{Up}{up}
    \SetKwInOut{Input}{Input}\SetKwInOut{Output}{Output}
    
    \Input{Input parameters of~\eqref{eq:PIES}}
    \Output{Service placement $(\textbf{x})$}
    
	Initialize $\textbf{x} \gets (x_e^{sm}=0)_{\forall e\in\edges, s\in\services, m\in\model}$\;
    \ForEach{edge $e\in\edges$}{
    
    	Initialize hash-map $\mathbf{v}$ with default values of $0.0$\;
    	\ForEach{user $u \in \users_{e}$}{
    	    \ForEach{model type $m \in \model_{s_u}$}{
    	        $\mathbf{v}_{s_u m} \gets \mathbf{v}_{s_u m} + Q(u,s_u,m)$\;
    	    }
    	}
    	
    	$\mathcal{A} \gets \{\}$   \tcp*{Considered service models.} 
    	$\mathcal{B} \gets \{\}$   \tcp*{Satisfied users.}
    	$\hat{R} \gets R_{e}$      \tcp*{Remaining storage.}
    	
		\Repeat{$(\hat{R} = 0)  \vee  (|\users_{e}| = |\mathcal{B}|)  \vee  (\mathcal{|A|} = |\text{keys}(\mathbf{v})|)$}{
			$s^*, m^* \gets \argmax\limits_{(s,m) \in \text{keys}(\mathbf{v}) \setminus \mathcal{A}} \{\mathbf{v}_{sm}\}$\;
			\If{$r_{s^*m^*} \leq \hat{R}$}{
				$x_{e}^{s^*m^*} \gets 1$\;
				$\hat{R} \gets \hat{R} - r_{s^*m^*}$\;
				\ForEach{$m \in \model_{s^*}$ where $(s^*,m) \notin \mathcal{A}$}{
					$\mathbf{v}_{sm} \gets \sum\limits_{u \in \users_e \setminus \mathcal{B}} Q(u,s^*,m) - Q(u,s^*,m^*)$\;
				}
			}
			
			$\mathcal{A} \gets \mathcal{A} \cup \{(s^*, m^*)\}$\;
			
			\ForEach{$u \in \users_e$}{
				\lIf{$Q(u,s^*,m^*) = 1$}{
					$\mathcal{B} \gets \mathcal{B} \cup \{u\}$
				}
			}
		}
    }    
    \Return \textbf{x}\;

\caption{Efficient Greedy Placement~(EGP)}
\label{alg:EGP}
\end{algorithm}

\subsubsection{Efficient Algorithm}
\label{sec:EGP}

Due to the heavy runtime complexity of AGP, there is a need for a more efficient algorithm that can relatively match AGP's performance w.r.t. approximating the optimal solution without the large computational cost. Thus, we introduce the \emph{Efficient Greedy Placement}~(EGP) algorithm. EGP iteratively places models by keeping a record of anticipated benefit of placing any given service model on an edge cloud. It does this without computing optimal scheduling, thus reducing its computational cost. EGP's pseudocode is provided in Algorithm~\ref{alg:EGP}. In line 1, the placement decision variable is initialized. Then, on line 2, we begin to iterate through each edge cloud to decide which service models should be placed on the current edge cloud. Line 3 initializes an empty hash-map and lines 4-6 compute the total QoS each service model relevant for the current edge cloud can provide towards the objective function. This data structure will be updated as decisions are made. Lines 7-9 initialize some supporting variables for EGP's logic. $\mathcal{A}$ records service models that have been considered for placement at some point for the current edge cloud; $\mathcal{B}$ keeps track of the set of users who can be provided maximum QoS; and $\hat{R}$ tracks the remaining storage capacity. Lines 10-20 find the service model $(s^*,m^*)$ that provide the maximum QoS among the service models we have yet to consider from our hash-map. If $(s^*,m^*)$ can be placed without violating the storage constraint, then it will be placed (line 13) and remaining storage will be reduced (line 14). Then, in lines 15-16 we recalculate the benefit of each other model implementation for $s^*$ by computing $\sum_{u\in\users_e\setminus\mathcal{B}} Q(u,s^*m)-Q(u,s^*,m^*)$. Since the newly placed model degrades the benefit from picking other implementations of the same service, we sum the difference between these other models and the newly placed model to reevaluate the benefit of placing them. 
Lines 10-20 repeat until either there is no more storage space, all of the edge cloud's user's achieve maximum QoS, or we have considered all models relevant for the edge cloud (according to the keys recorded in the hash-map). The runtime complexity of EGP is $O(|\users|+|\users||\model_s^{\max}|)$ where $|\model_{s}^{\max}|=\max_{s\in\services}(|M_{s}|)$.

\section{Experimental Design \& Results}
\label{sec:experiments}
We consider numerical simulations and a real-world implementation with real image data and ML models. Algorithms are implemented in Python 3.8 and experiments are largely run on a macOS $64$~bit machine with a $3.2$~GHz quad-core Intel Core i5 processor and $32$~GB $1600$~MHz DDR3 memory.

\subsection{Baselines}
We compare AGP and EGP to the ILP defined in Eq.~\eqref{eq:PIES} which is solved using the PuLP Python library~\cite{PuLP} and the CBC solver~\cite{CBC} (referred to as ``OPT"). We also adapt the standard dynamic programming algorithm for the \textsc{0/1 Knapsack} problem for the PIES problem (referred to as ``SCK"). SCK considers the individual service models as the separate items --- with their storage costs serving as their weights and Eq.~\eqref{eq:QoS} as their values. SCK will use Alg.~\ref{alg:greedy-scheduling} for scheduling. We also consider a random placement and scheduling heuristic (referred to as ``RND") as our other baseline.

%

\begin{figure}
    \centering
    \includegraphics[width=\linewidth]{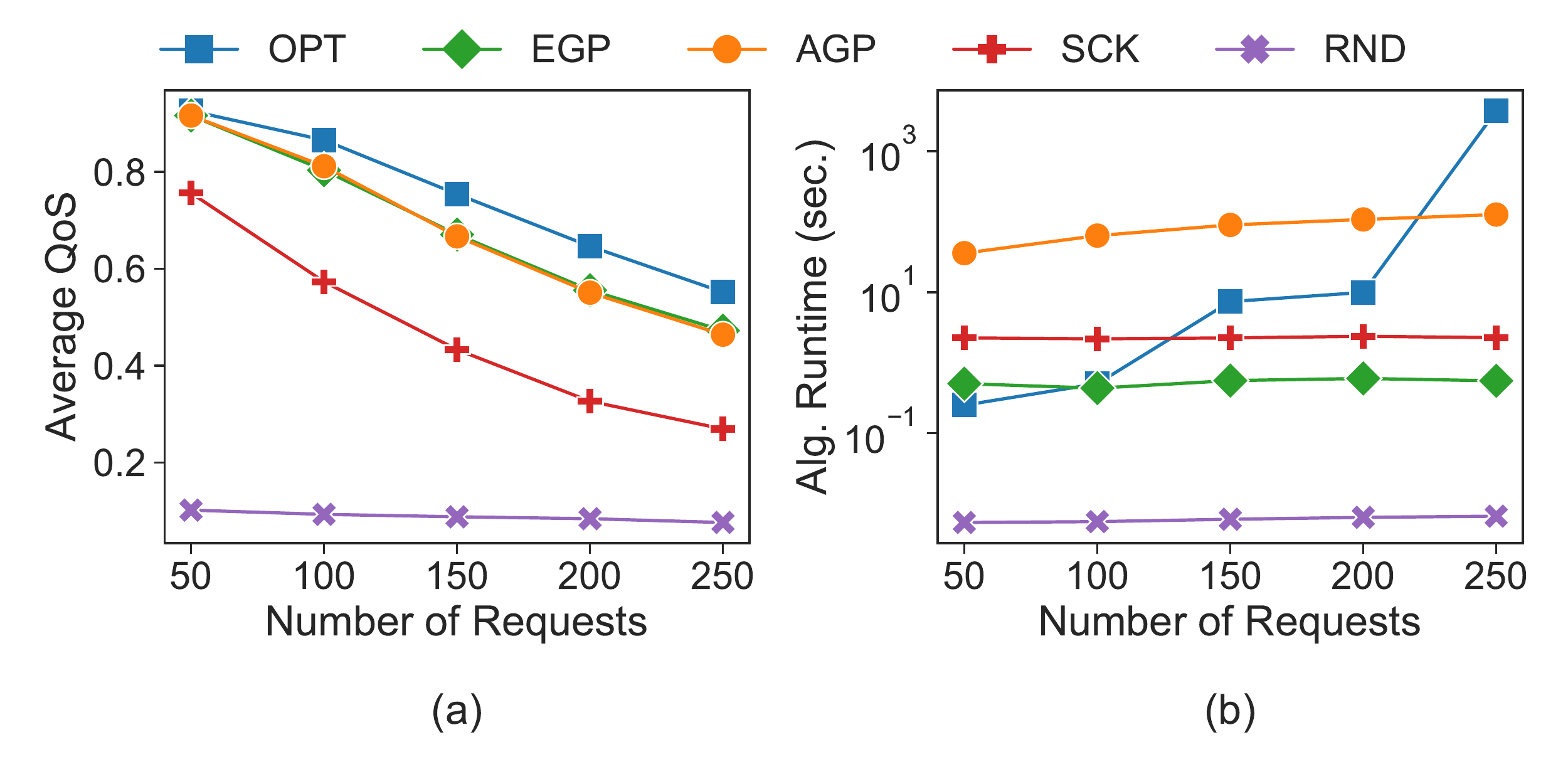}
    \caption{\textbf{Validation Test.} For these experiments, we consider $|\users| = [50, 100, 150, 200, 250]$, with $10$ trials each. We perform this validation test to confirm the efficacy of EGP relative to the optimal solution and the approximation algorithm, AGP.}
    \label{fig:validation_test}
\end{figure}

\begin{figure}
    \centering
    \includegraphics[width=\linewidth]{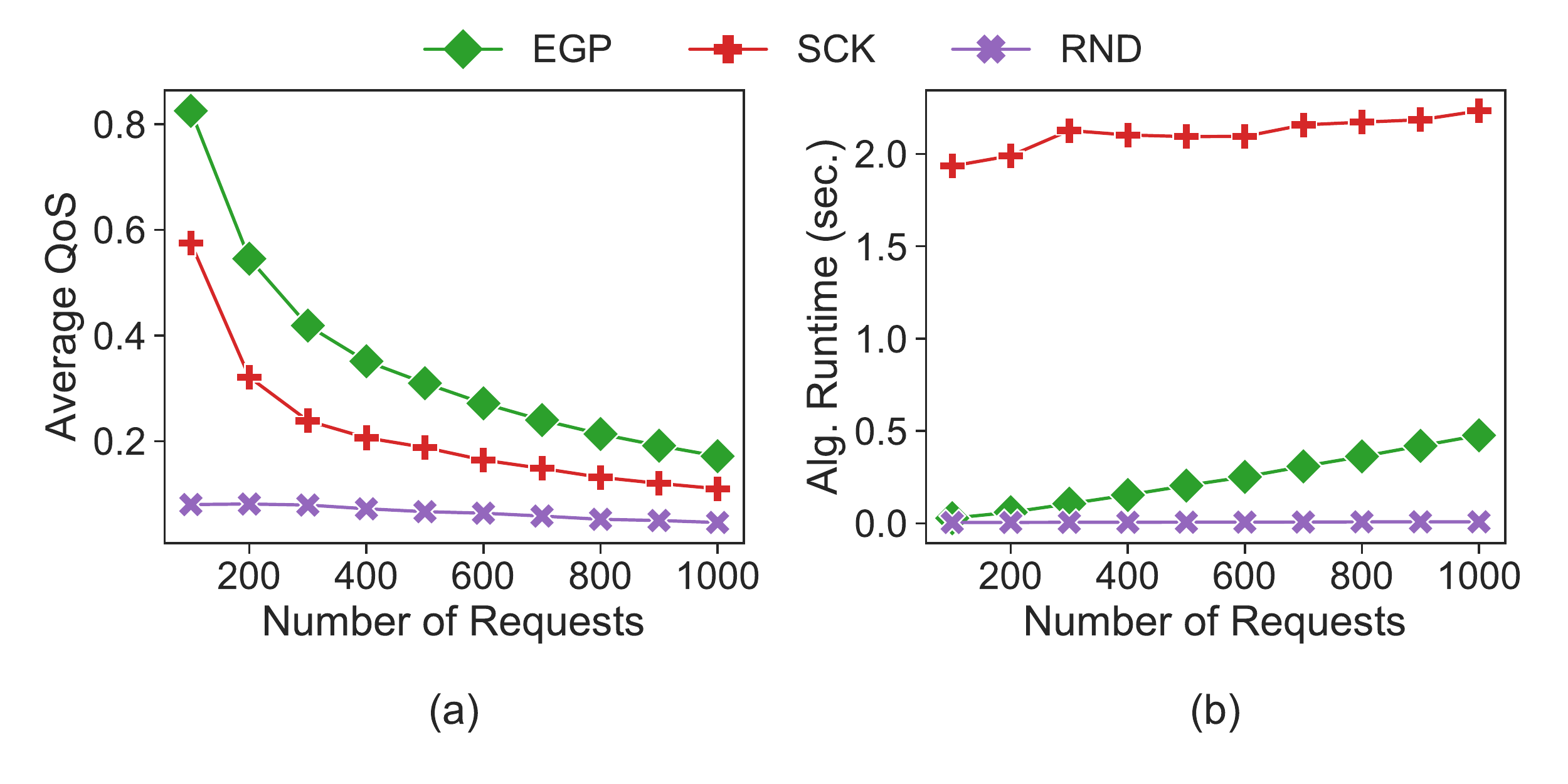}
    \caption{\textbf{Numerical Results.} Experiments with more user requests $|\users| = [100, 200, \cdots, 1000]$, with $100$ trials each.}
    \label{fig:numerical_test}
\end{figure}

\subsection{Numerical Simulations}
We sample uniformly random integer values for edge capacities using the following distributions $K_{e}, W_{e}\in[300,600]$, and $R_{e}\in[100,200]\;(\forall e \in \edges)$. Service model storage costs are similarly uniformly sampled integer values where $k_{sm}, w_{sm}\in[15,30]$ and $r_{sm}\in[10,20]\;(\forall (s,m) \in \SM)$. Service models' cached accuracy, $A_{sm}$, are sampled from a Gaussian distribution with a mean of $0.65$ and a standard deviation of $0.1$ (sampled values are clipped to the range $[0,1]$). We assume users request service types from $\services$ uniformly at random. User accuracy thresholds, $\alpha_u$, are set to $1-\epsilon$ where $\epsilon$ is sampled from an exponential distribution clipped to the range $[0,1]$ with $\lambda=0.125$. User delay thresholds, $\delta_u$, are set to a sampled value (clipped to range $[0,\delta_{\max}]$ where $\delta_{\max}=10$ seconds) from an exponential distribution where $\lambda=1.5$.  Finally, we consider $|\edges|=10$ edge clouds, $|\services|=100$ services with each service having a random number of implementations in the range of $[1,10]$ (sampled uniformly). We increase the number of users for experiments.

First, we compare our proposed algorithms (EGP and AGP) to the optimal solution. Due to the hardness of the PIES problem, we  consider a validation case to demonstrate EGP performance relative to the optimal solution\footnote{It is worth noting in some larger scenarios we ran, it took over $20$~hours for the optimal solver to complete.}  and AGP. In Fig.~\ref{fig:validation_test}a, we see that the AGP and EGP algorithms perform well for approximating the optimal solution provided by the solver. More importantly, we find that EGP is able to match AGP's performance even though it has a proven approximation bound. Specifically, we find that, on average, AGP and EGP achieve an approximation ratio of $0.900$ and $0.904$, respectively.\footnote{These values are computed by taking the QoS for each algorithm in one experiment trial and dividing it by the QoS provided by the optimal solution.} In Fig.~\ref{fig:validation_test}b, we see that EGP greatly outclasses both Optimal and AGP in terms of efficiency. The excessive cost associated with AGP's runtime is due to its reliance on performing optimal model scheduling for each candidate service model at each selection step. EGP also manages to best SCK. When considering more requests, we see in Fig.~\ref{fig:numerical_test} that our EGP solution achieves roughly $50\%$ more QoS than SCK while still managing to be more efficient.

\subsection{Real-World Implementation}
We consider a simple real-world setup using set of $2$ Nvidia Jetson Nano and $1$ Raspberry Pi 3B+ nodes as IoT devices and a 2013 Apple iMac serving as the edge cloud. Each IoT device hosts roughly a third of the 2012 ImageNet dataset~\cite{imagenet2012} via non-overlapping subsets. Each IoT device submits $100$ requests with randomly sampled images from these data. Requests are submitted wirelessly for image classification service models hosted on the edge cloud. Here we focus on the multi-implementation aspect of the PIES placement sub-problem where multiple implementations for image classification can be placed and how this affects the QoS in the real-world case. 
Using PyTorch~\cite{pytorch}, we evaluate pre-trained image classification models to record their accuracy metric over the ImageNet 2012 data and record the average time needed for each model to perform inference, see Table~\ref{tab:img_classification_models}. 
The edge cloud can place $|R_{e}|=1$ model where each ML model is associated with $r_{sm}=1$ storage cost. Since all the models accept the same data size, we fix the communication costs $w_{sm}=1$ for all the ML models. The communication and computation capacities for the edge cloud are robustly tuned to match the real-world computation and communication delay.  Similarly to the numerical simulation setup, each request's $\alpha_u$ is sampled from $1-\epsilon$ where $\epsilon$ is sampled from an exponential distribution clipped to the range $[0.0,1.0]$ with a rate parameter $\lambda=0.0625$. Each sample's delay threshold $\delta_u$ is sampled from a Gaussian distribution with a mean of $0.5$ and a standard deviation of $0.125$, clipped to the range $[0,\delta_{\max}]$ where $\delta_{\max}=1.0$ second. The QoS for each request is calculated using Eq.~\eqref{eq:QoS} using the real-time incurred latency (in seconds) and the evaluated model accuracy. 

In Fig.~\ref{fig:realworld_results}a, we see that all considered algorithms but random are able to match the optimal solution --- with all non-random algorithms exclusively placing MobileNet in Fig.~\ref{fig:realworld_results}b. In Fig.~\ref{fig:realworld_results}a, the QoS distribution of the non-random algorithms are much more concentrated on the upper end when compared to random. In this setup, random does better than in Figs.~\ref{fig:validation_test} ,\ref{fig:numerical_test} because a request will never be dropped (i.e., there is always an image classification available to provide \emph{some} QoS). Thus, future real-world implementations must consider various service types. For the meantime, these results show promise but could be improved and expanded upon by considering a more robust real-world setup with more EI service types (e.g., speech-to-text, video classification).


\begin{table}
	\centering
	\caption{Image classifications models used for the real-world implementation with model accuracy metrics and average computational delay from evaluation with ImageNet 2012 data.}
	\label{tab:img_classification_models}
	\begin{tabular}{rcc}
		\toprule
		Models & Accuracy, $A_{sm}$ & Avg. Comp. Delay (sec.) \\
		\midrule
		AlexNet~\cite{alexnet} & 56.52\% & 0.04
		\\
		DenseNet~\cite{densenet} & 77.14\% & 0.47
		\\
		GoogLeNet~\cite{googlenet} & 69.78\% & 0.13
		\\
		MobileNet~\cite{mobilenet} & 71.88\% & 0.06
		\\
		ResNet~\cite{resnet} & 69.76\% & 0.08
		\\
		SqueezeNet~\cite{squeezenet} & 58.09\% & 0.07
		\\
		\bottomrule
	\end{tabular}	
\end{table}


\begin{figure}
    \centering
    \includegraphics[width=\linewidth]{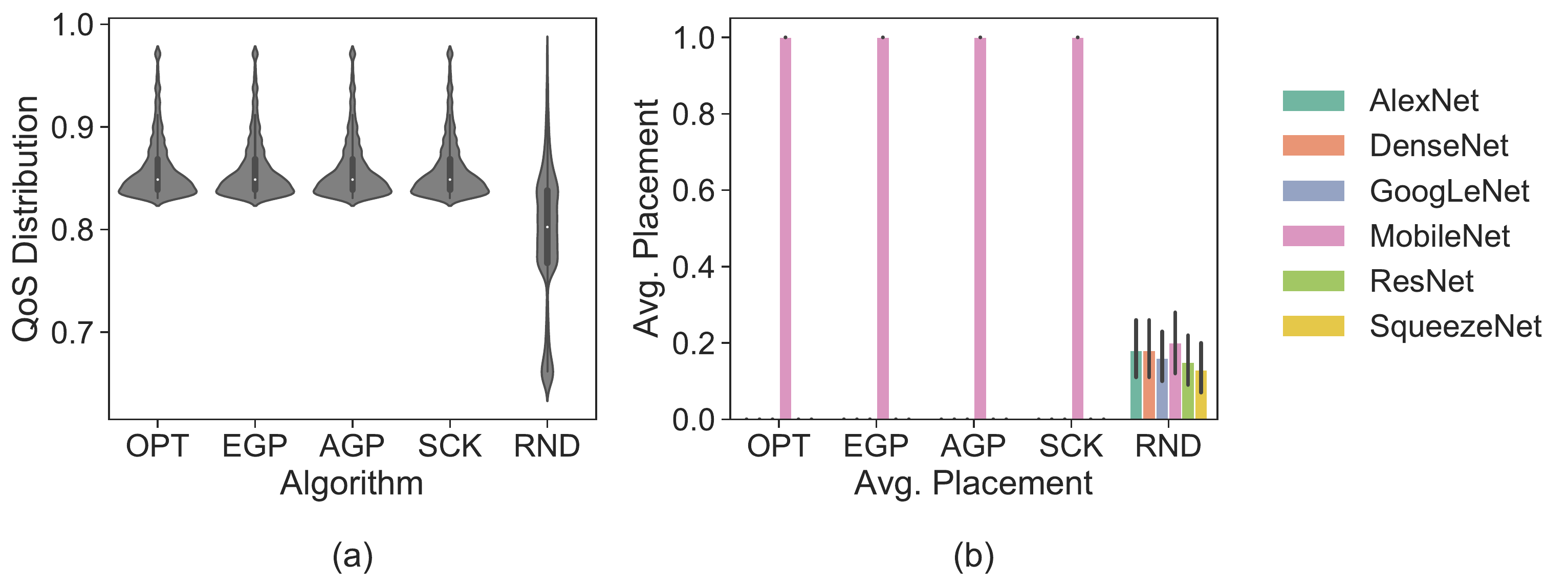}
    \caption{\textbf{Real world implementation.} (5a) QoS distribution achieved by each of the algorithms. (5b) Average placement decision for each image classification model across $100$ trials.}
    \label{fig:realworld_results}
\end{figure}

\section{Conclusions}
\label{sec:conclusion}
The PIES problem, to the best of our knowledge, is the first service placement and scheduling problem that explicitly considers the case of EI services having multiple implementations available for the same service. We proved that the PIES problem is NP-\emph{hard} and prove a greedy set optimization algorithm can provide a $(1-1/e)$-approximation guarantee of the optimal solution. We then introduce a streamlined greedy algorithm that empirically matches this algorithm's performance with much greater efficiency. While these results are preliminary, they serve as a foundational first step towards this breed of service placement. For future work, we plan to consider more dynamic extension of this work where service placement decisions are made over a time horizon rather than all at once. Additionally, we will expand the real-world setup to include more EI services (e.g., video classification) with multiple implementations for placement and scheduling.

\section*{Acknowledgements}
This material is based upon work supported by the National Science Foundation under grant no.~CSR-1948387. This work was also partially funded by Cisco Systems Inc. under the research grant number 1215519250. We thank both our sponsors for their generous support.


\balance

\bibliographystyle{ieeetr}

\end{document}